\numberwithin{equation}{section} %% Comment out for sequentially-numbered
\numberwithin{figure}{section} %% Comment out for sequentially-numbered
  \theoremstyle{plain}
  \newtheorem{thm}{Theorem}[section]
  \theoremstyle{plain}
  \theoremstyle{plain}
  \theoremstyle{remark}
  \theoremstyle{remark}
  \theoremstyle{plain}
\def\<{{\langle }}
\def\>{{\rangle }}
\def\ket#1{|#1\rangle}
\def\bra#1{\langle#1|}
\def\<{{\langle }}
\def\>{{\rangle }}
\begin{document}

\title[Partition of three qubits ]{Partition of 3-qubits using local gates}

\author{Oscar Perdomo}
\date{\today}

\curraddr{Oscar Perdomo\\
{\bf 1.} Rigetti Computing\\
2919 Seventh Street, Berkeley, CA 94710\\
{\bf 2.} Professor\\Department of Mathematics\\
Central Connecticut State University\\
New Britain, CT 06050\\}

\email{ perdomoosm@ccsu.edu}

\begin{abstract} It is well known that local gates have smaller error than non-local gates. For this reason it is natural to define the following equivalence relation between $n$-qubit states:  $\ket{\phi_1}\sim\ket{\phi_2}$ if there 
exists a local gate $U$ such that $U\ket{\phi_1}=\ket{\phi_2}$. Since two states that differ by a local gate have the same entanglement entropy, then the entanglement entropy defines a function in the quotient space. In this paper we study this equivalence relation on (i) the set $\mathbb{R}\hbox{\rm Q}(3)$ of  3-qubit states with real amplitudes, (ii) the set  $Q\mathcal{C}$ of 3-qubit states that can be prepared with gates on the  Clifford group,  and (iii) the set $Q\mathbb{R}\mathcal{C}$ of 3-qubit states in $Q\mathcal{C}$ with real amplitudes. We show that the set $Q\mathcal{C}$ has 8460 states and the quotient space has 5 elements. We have $\frac{Q\mathcal{C}}{\sim}=\{S_{0},S_{2/3,1},S_{2/3,2},S_{2/3,3},S_{1}\}$. As usual, we will call the elements in the quotient space, orbits. We have that the orbit $S_0$ contains all the states that differ by a local gate with the state $\ket{000}$. There are 1728 states in $S_0$ and as expected, they have zero entanglement entropy.  All the states in the orbits $S_{2/3,1},S_{2/3,2},S_{2/3,3}$ have entanglement entropy $2/3$ and each one of these orbits has 1152 states. Finally, the orbit $S_{1}$ has 3456 elements and all its states have maximum entanglement entropy equal to one. We also study how the controlled not gates $CNOT(1,2)$ and $CNOT(2,3)$ act on these orbits. For example, we show that when we apply a $CNOT(1,2)$ to all the states in $S_0$, then 960 states go back to the same orbit $S_0$ and 768 states go to the orbit $S_{2/3,1}$. Similar results are obtained for  $\mathbb{R}Q\mathcal{C}$. We also show that the entanglement entropy function reaches its maximum value 1 in more than one point when acting on 
$\frac{\hbox{$\mathbb{R}$\rm Q(3)} }{\sim}$. 
\end{abstract}

%\subjclass[2000]{53C42, 53A10}
\maketitle

\section{introduction}

The space of $n$-qubits is modeled by unitary vectors in $\mathbb{C}^{2^n}$ while the set of gates is modeled by the space $U(2^n)$ of square unitary matrices of dimension $2^n$. Given a subgroup $G$ of the group of unitary 2 by 2 matrices, we defined the set of local gates generated by $G$ as 

$$ L(G)=\hbox{ group generated the matrices } \{U_1\otimes\dots \otimes U_n: U_i\in G\}$$

Under the assumption that the $n$ qubits are connected by $CNOT$ gates, it is natural to define the set of all the {\it states generated by the group $G$} as the set

$$QG=\{v: v \hbox{ is prepared using CNOT gates and matrices in $L(G)$}\}$$

We say that two states $\ket{\phi_1}$ and $\ket{\phi_2}$ in $QG$ are equivalent if $\ket{\phi_1}=U\ket{\phi_2}$ for some matrix $U\in L(G)$. It is clear that a perfect understanding of the set of equivalent classes $\frac{QG}{\sim}$ turns out to be very important to prepare and deal with the states generated by the group $G$.

 Before we continue, let us denote by $U(2)$ the group of 2 by 2 unitary matrices and by  $O(2)$ the group of 2 by 2 orthogonal matrices. Let us also define the following three matrices

$$ H=\left(
\begin{array}{cc}
 \frac{1}{\sqrt{2}} & \frac{1}{\sqrt{2}} \\
 \frac{1}{\sqrt{2}} & -\frac{1}{\sqrt{2}} \\
\end{array}
\right)  \quad Z=\left(
\begin{array}{cc}
1& 0 \\
0 & -1 \\
\end{array}
\right) \, \quad  P=\left(
\begin{array}{cc}
1& 0 \\
0 & i \\
\end{array}
\right) \, .
$$

Finally, we recall that the Clifford group is the set of gates that are generated by $CNOT$ gates and  the local gates $H$ and $P$. See \cite{G}, \cite{G1}, \cite{HC} and the very recent paper \cite{T} for more properties on the Clifford group. In this paper let us denote by $\mathcal{C}$ the group of 2 by 2 matrices generated by the matrices $H$ and $P$ and by $\mathbb{R}\mathcal{C}$ the group of 2 by 2 matrices generated by $H$ and $Z$. A direct computation shows that the group $\mathbb{R}\mathcal{C}$ has 16 matrices while the set $\mathcal{C}$ has 192 elements. When $G=U(2)$ then $QG$ consists of every possible $n$-qubit state. In this case, we denote $QU(2)$ as $\mathbb{C}Q(n)$ and the quotient space $\frac{\mathbb{C}Q(n)}{\sim}$ by $\overline{\mathbb{C}Q(n)}$.  When $G=O(2)$ then $QG$ consists of every possible $n$-qubit state with all amplitudes real. In this case, we denote $QO(2)$ as $\mathbb{R}Q(n)$ and the quotient space $\frac{\mathbb{R}Q(n)}{\sim}$ by $\overline{\mathbb{R}Q(n)}$.

In the case of 2-qubit states, \cite{OA} presents a clear exposition of the quotient space $\overline{\mathbb{R}Q(2)}$. The paper shows that $\overline{\mathbb{R}Q(2)}$ is in one to one correspondence with the interval $[0,\frac{\pi}{4}]$, and if $O_t\in \overline{\mathbb{R}Q(2)}$ denotes the orbit of states represented by the number $t\in [0,\frac{\pi}{4}]$, then the states in $O_0$ have entanglement entropy 1 and  they form a pair of disjoint circles;  the states in $O_\frac{\pi}{4}$ have entanglement entropy 0 and  they form a torus; and for any $d$ between $0$ and $\frac{\pi}{4}$, the states in $O_d$ have entanglement entropy  $1- \log_2 \sqrt{\frac{(1+\sin 2 d)^{1+\sin 2 d}}{(1-\sin 2 d)^{-1+\sin 2 d}}}$ and they form a pair of disjoint tori. Moreover, the $CNOT(1,2)$ gate spreads the  states in any orbit to states in all the orbits. In other words, for any $t_1$ and $t_2$ in $[0,\frac{\pi}{2}]$

$$\{CNOT(1,2) \ket{v}: \ket{v} \in O_{t_0} \}\cap O_{t_1} \hbox{ is not the empty set }$$

As a corollary of these facts on the space $\overline{\mathbb{R}Q(2)}$ we obtain that every pair of 2-qubit states with real amplitudes can be connected using only one  $CNOT$ gate and local gates with real entries.
The paper \cite{Z} shows that  $\overline{\mathbb{C}Q(n)}$ is also in one to one correspondence with a closed interval and, again, the $CNOT(1,2)$ spreads the states in any orbit to states in all the orbits.

Moving to 3-qubit states,  the entanglement entropy function defined on the set   $\overline{\mathbb{C}Q(3)}$  reaches the maximum value $1$ at exactly one orbit, see \cite{SM}.  The first result in this paper, Theorem \ref{one}, shows that the entanglement entropy function defined on $\overline{\mathbb{R}Q(3)}$ reaches its maximum value $1$ at more than one point. We prove this by showing that if  $|\xi_1\rangle = \frac{1}{\sqrt{2}}(|000\rangle+|111\rangle)$ and $|\xi_2\rangle  = \frac{1}{2}(|001\rangle-|010\rangle+|100\rangle+|111\rangle)$, then for any local gate $U\in L(O(2))$, the system of equations coming from the equation $\ket{\xi_2}=U\ket{\xi_1}$ with $U\in L(O(2))$ has no solution. As it is known from \cite{SM}, the states $\xi_1$ and $\xi_2$ can be connected by a local gate in $L(U(2))$. We explicitly show a local gate $U$ in $L(U(2))$ that satisfies $\ket{\xi_2}=U\ket{\xi_1}$.

 In order to understand better previous result, we study/compare the quotient space $\overline{Q\mathcal{C}}=\frac{Q\mathcal{C}}{\sim}$ and the quotient space $\overline{\mathbb{R}Q\mathcal{C}}=\frac{Q\mathbb{R}\mathcal{C}}{\sim}$. We check that  $Q\mathcal{C}$ has 8460 states and it gets partitioned into $5$ orbits to form the space $\overline{Q\mathcal{C}}$. The action of the controlled not gates $CNOT(1,2)$ and $CNOT(2,3)$ is explained in Figure \ref{cg}.
%image 4
\begin{figure}[hbtp]\label{cg}
\begin{center}\includegraphics[width=.65\textwidth]{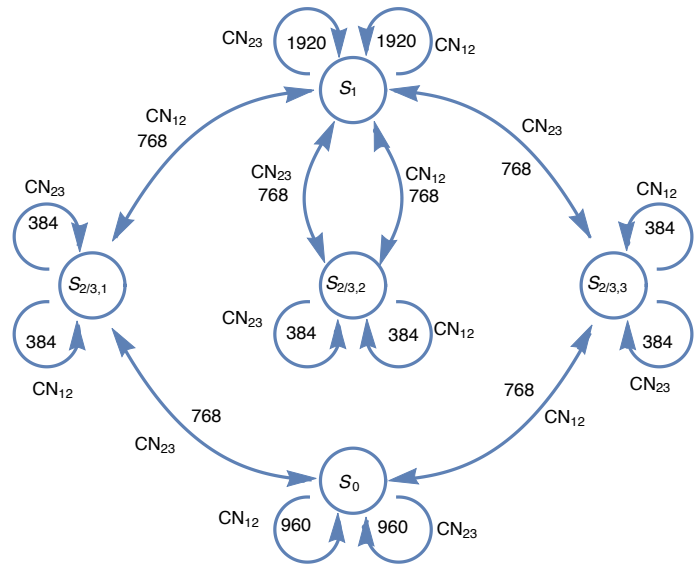}
\caption{$\overline{Q\mathcal{C}}$ has 5 orbits. The number by the side of the arrows indicate the number of states that are moved  from one orbit to another by the given  $CNOT$ gate.} \label{cg}
\end{center}
\end{figure}

The set $Q\mathbb{R}\mathcal{C}$ has 480 3-qubit states and it gets partitioned into $6$ orbits to form the space $\overline{Q\mathbb{R}\mathcal{C}}$. The action of the controlled not gates $CNOT(1,2)$ and $CNOT(2,3)$ is explained in Figure \ref{rg}.  For counting purposes, we are treating states that are related by global phase as distinct. For example, the states $\ket{\phi}$ and $-\ket{\phi}$ considered to be two states. Of Course they are in the same orbit.

%image 4
\begin{figure}[hbtp]
\begin{center}\hskip1cm\includegraphics[width=.65\textwidth]{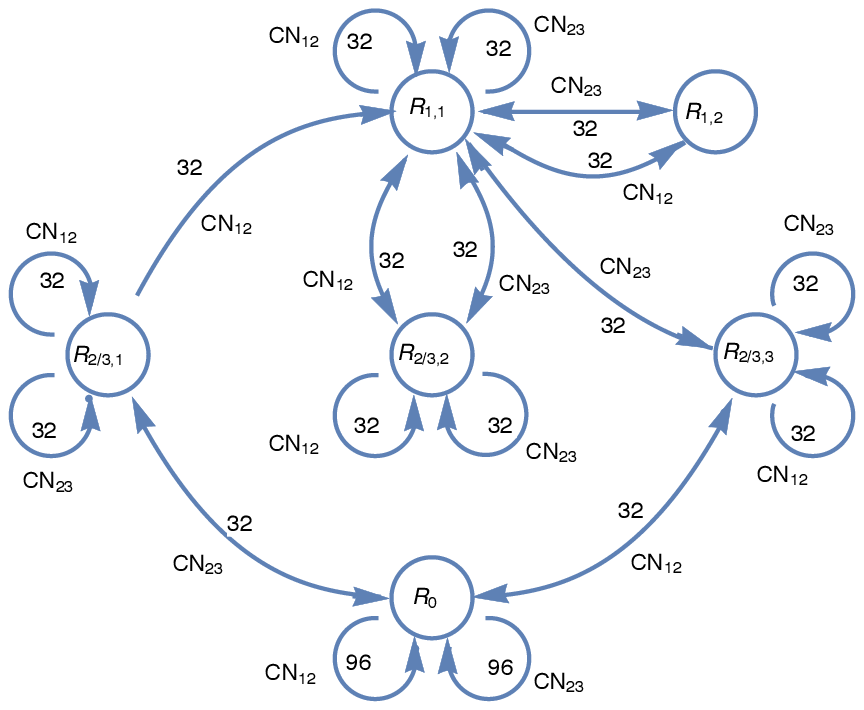}  
\caption{Action of the controlled not gates on the space $Q\mathbb{R}\mathcal{C}$. }\label{rg}
\end{center}
\end{figure}
%end image 4

\section{Proof on the main results}

\subsection{On the maximum of the entanglement entropy function on the set $\overline{\mathbb{R}Q(3)}$ }

It is well known that the maximum of the entanglement entropy function on the 3-qubit states $\mathbb{C}Q(3)$ is 1 and that  any two states with entanglement entropy $1$ can be connected by a local gate in $L(U(2))$, \cite{SM}.  As a consequence we have that the entanglement entropy function defined on the space $\overline{\mathbb{C}Q(3)}$ reaches its maximum  at only one point.
The following Theorem shows that the maximum of the entanglement entropy function  in the space $\overline{\mathbb{R}Q(3)}$ is also 1 but this maximum  happens at more than one point.

\begin{thm}\label{one} No gate of the form $A_1\otimes A_2\otimes A_3$ with $A_i$ an orthogonal 2 by 2  matrix  transforms the 3-qubit $\ket{\xi_2}=\frac{1}{2}(|001\rangle-|010\rangle+|100\rangle+|111\rangle)$ into the qubit  $\ket{\xi_1}=\frac{1}{\sqrt{2}}(|000\rangle+|111\rangle)$. Moreover,  a direct verification shows that if,
$$ A_1=\left(
\begin{array}{cc}
 \frac{e^{-\frac{1}{130} (341 i)}}{\sqrt{2}} & -\frac{i e^{-\frac{1}{130} (341 i)}}{\sqrt{2}} \\
 \frac{e^{\frac{341 i}{130}}}{\sqrt{2}} & \frac{i e^{\frac{341 i}{130}}}{\sqrt{2}} \\
\end{array}
\right),\quad
A_2= \left(
\begin{array}{cc}
 \frac{i e^{-\frac{1}{65} (18 i)}}{\sqrt{2}} & -\frac{e^{-\frac{1}{65} (18 i)}}{\sqrt{2}} \\
 \frac{e^{\frac{18 i}{65}}}{\sqrt{2}} & -\frac{i e^{\frac{18 i}{65}}}{\sqrt{2}} \\
\end{array}
\right)$$
and 

$$A_3=\left(
\begin{array}{cc}
 \frac{e^{\frac{29 i}{10}}}{\sqrt{2}} & -\frac{i e^{\frac{29 i}{10}}}{\sqrt{2}} \\
 -\frac{i e^{-\frac{1}{10} (29 i)}}{\sqrt{2}} & \frac{e^{-\frac{1}{10} (29 i)}}{\sqrt{2}} \\
\end{array}
\right)$$

then, $\ket{\psi_1}=A_1\otimes A_2 \otimes A_3 \ket{\psi_2}$
\end{thm}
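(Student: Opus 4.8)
The plan is to rule out the existence of orthogonal $A_1,A_2,A_3$ with $A_1\otimes A_2\otimes A_3\ket{\xi_2}=\ket{\xi_1}$ by extracting invariants that must agree on the two states but do not. The natural invariants to use are the three single-qubit reduced density matrices: if $U=A_1\otimes A_2\otimes A_3$ is local then conjugating by $U$ sends the reduced density matrix $\rho_k$ of $\ket{\xi_2}$ on qubit $k$ to $A_k\rho_k A_k^{t}$, so each eigenvalue spectrum of $\rho_k$ is preserved. For $\ket{\xi_1}=\frac{1}{\sqrt2}(\ket{000}+\ket{111})$ every reduced density matrix is $\frac12 I$, with spectrum $\{1/2,1/2\}$. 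So the first step is to compute the three reduced density matrices of $\ket{\xi_2}$ and check whether all three are maximally mixed. If even one is not, we are immediately done; if all three happen to be $\frac12 I$ (which is consistent with both states having entanglement entropy $1$ in the bipartite-cut sense the paper uses), the single-qubit spectra do not separate the states and we need a finer invariant.

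Assuming, as is likely the point of the theorem, that the single-qubit data coincide, the next step is to pass to an invariant that is sensitive to the \emph{real} structure. Over $O(2)$ there is a natural such object: write $\ket{\xi}$ as a real tensor $T\in\mathbb{R}^2\otimes\mathbb{R}^2\otimes\mathbb{R}^2$ and consider Cayley's hyperdeterminant $\mathrm{Hdet}(T)$, which scales by $(\det A_1)^2(\det A_2)^2(\det A_3)^2=1$ under $A_1\otimes A_2\otimes A_3$ with $A_i\in O(2)$ — so it is an exact invariant, not merely an invariant up to sign, on $\mathbb{R}Q(3)$. For $\ket{\xi_1}$ the only nonzero amplitudes are $t_{000}=t_{111}=\frac{1}{\sqrt2}$ and one computes $\mathrm{Hdet}(\xi_1)=\frac14\neq 0$ (indeed $\ket{\xi_1}$ is the GHZ state). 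For $\ket{\xi_2}=\frac12(\ket{001}-\ket{010}+\ket{100}+\ket{111})$ one computes $\mathrm{Hdet}(\xi_2)$ directly from the $2\times2\times2$ array and checks it is a different number (or zero). If the two hyperdeterminant values disagree, the theorem follows. A convenient cross-check is that, because $\ket{\xi_1}$ and $\ket{\xi_2}$ \emph{are} related by a gate in $L(U(2))$ — the explicit $A_1,A_2,A_3$ in the statement do this — the $U(2)$-invariant $|\mathrm{Hdet}|$ must agree; so the obstruction must live precisely in the sign/phase that $O(2)$ cannot fix, i.e. one expects $\mathrm{Hdet}(\xi_1)$ and $\mathrm{Hdet}(\xi_2)$ to be equal in absolute value but opposite in sign.

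If the hyperdeterminant also fails to separate them, the fallback is brute force: parametrize $A_k=\bigl(\begin{smallmatrix}\cos\theta_k & \mp\sin\theta_k\\ \sin\theta_k & \pm\cos\theta_k\end{smallmatrix}\bigr)$ (the two connected components of $O(2)$), write out the eight scalar equations coming from equating coefficients of $\ket{ijk}$ in $A_1\otimes A_2\otimes A_3\ket{\xi_2}=\ket{\xi_1}$, and show this trigonometric system is inconsistent — for instance by taking suitable sums of squares of the equations to eliminate the angles and reach a numerical contradiction, handling the four sign-combinations of the determinants separately. The final, easy step is the ``moreover'': substitute the given unitary $A_1,A_2,A_3$ into $A_1\otimes A_2\otimes A_3\ket{\xi_2}$ and verify coefficient-by-coefficient that the result is $\ket{\xi_1}$; this is a finite complex arithmetic check. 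The main obstacle is choosing an invariant delicate enough to see the difference between $O(2)$- and $U(2)$-equivalence while still being easy to evaluate; I expect the hyperdeterminant (or, equivalently, the sign of a suitable real polynomial in the amplitudes) to be exactly that invariant, which is why I would try it before resorting to the explicit angle computation.
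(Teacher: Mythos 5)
Your proposal is correct, and it takes a genuinely different route from the paper. The paper parametrizes the rotation case $A_i=\bigl(\begin{smallmatrix}a&b\\-b&a\end{smallmatrix}\bigr)$, writes out the five independent polynomial equations forced by $A_1\otimes A_2\otimes A_3\ket{\xi_2}=\ket{\xi_1}$ together with $a_1^2+a_2^2=a_3^2+a_4^2=a_5^2+a_6^2=1$, computes that a Gr\"obner basis of the system is $\{1\}$, and then asserts that the remaining seven determinant-sign cases are handled similarly. Your invariant-theoretic argument is cleaner and more informative, and it does go through: both states have all three single-qubit reduced density matrices equal to $\tfrac12 I$ (so that step correctly fails to separate them, as you anticipated), while Cayley's hyperdeterminant gives $\mathrm{Hdet}(\xi_1)=t_{000}^2t_{111}^2=\tfrac14$ and, since the only surviving term for $\xi_2$ is $4\,t_{001}t_{010}t_{100}t_{111}=4\cdot(-\tfrac1{16})$, $\mathrm{Hdet}(\xi_2)=-\tfrac14$. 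Because $\mathrm{Hdet}$ transforms by $(\det A_1)^2(\det A_2)^2(\det A_3)^2=1$ under any $A_i\in O(2)$, this single computation disposes of all eight components of $O(2)^{\times 3}$ at once, whereas the paper's Gr\"obner computation only explicitly treats one of them. Your approach also explains conceptually why the obstruction is real-specific: over $U(2)$ only $|\mathrm{Hdet}|$ is invariant, which is consistent with the explicit unitary $A_1\otimes A_2\otimes A_3$ in the statement connecting the two states. The one thing your write-up leaves hedged is the actual evaluation of $\mathrm{Hdet}(\xi_2)$; once that number is computed (it is $-\tfrac14$), the first claim of the theorem is fully proved, and the ``moreover'' is, as you say, a finite arithmetic check in both treatments.
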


\begin{proof} Let us consider a matrix $U$ of the form $A_1\otimes  A_2 \otimes A_3$, where,

$$A_1=\begin{pmatrix}
    a_1 & a_2  \\
    -a_2 & a_1
  \end{pmatrix} ,\quad A_2= \begin{pmatrix}
    a_3 & a_4  \\
    -a_4 & a_3
  \end{pmatrix} ,\quad A_3=\begin{pmatrix}
    a_5 & a_6  \\
    -a_6 & a_5
  \end{pmatrix}$$
  
 where the equations
 
 \begin{eqnarray}\label{eq1}
 a_1^2+a_2^2=1,\quad  a_3^2+a_4^2=1,\quad  a_5^2+a_6^2=1
 \end{eqnarray}
 
 are safisfied.  We have that the matrix $U$ cannot send the state  $|\psi_2\rangle  = \frac{1}{2}(|001\rangle-|010\rangle+|100\rangle+|111\rangle)$ to the state $|\psi_1\rangle = \frac{1}{\sqrt{2}}(|000\rangle+|111\rangle)$ because the system of equations

 \begin{eqnarray}\label{eq2}
 \bra{|0\rangle}U|\psi_2\rangle- \bra{|7\rangle}U|\psi_2\rangle=0,\quad  \bra{|j\rangle}U|\psi_2\rangle=0\quad \hbox{j=1\dots 6}
 \end{eqnarray}
 
along with the Equation (\ref{eq1}) do not have a solution. Here $\ket{0}=\ket{000},\dots , \ket{7}=\ket{111}$. Due to the symmetries, two of the 7 equations in (\ref{eq2}) are repeated. We only have the following 5 equations

\begin{eqnarray*}
& -\frac{1}{2} a_1 a_3 a_5-\frac{1}{2} a_2 a_4 a_5+\frac{1}{2} a_2 a_3 a_6-\frac{1}{2} a_1 a_4 a_6=0\\
 &\frac{1}{2} a_1 a_3 a_5+\frac{1}{2} a_2 a_4 a_5-\frac{1}{2} a_2 a_3 a_6+\frac{1}{2} a_1 a_4 a_6=0\\
 &-\frac{1}{2} a_2 a_3 a_5+\frac{1}{2} a_1 a_4 a_5-\frac{1}{2} a_1 a_3 a_6-\frac{1}{2} a_2 a_4 a_6=0\\
 &\frac{1}{2} a_2 a_3 a_5-\frac{1}{2} a_1 a_4 a_5+\frac{1}{2} a_1 a_3 a_6+\frac{1}{2} a_2 a_4 a_6=0\\
 &-\frac{1}{2} a_1 a_3 a_5+\frac{1}{2} a_2 a_3 a_5-\frac{1}{2} a_1 a_4 a_5-\frac{1}{2} a_2 a_4 a_5+\frac{1}{2} a_1 a_3 a_6+\frac{1}{2} a_2 a_3 a_6-\frac{1}{2} a_1 a_4 a_6+\frac{1}{2} a_2 a_4 a_6=0
\end{eqnarray*}

A direct computation shows that a Grobner basis of this system of equations above is $\{1\}$, and therefore the system does not have a solution and the matrix $U$ does not exist. This takes care of the case when all the matrices $A_1$, $A_2$ and $A_3$ have determinant 1. A similar proof can be done for the other 7 cases coming from the other possibilities for the determinant of $A_i$.

\end{proof}

\subsection{The space $\overline{Q\mathcal{C}}$ } In this section we describe the partition induced by the equivalence relation on the set of states that can be prepared with Clifford gates. 

\begin{thm} Given a set $B$ let us denote $|B|$ the number of elements of $B$. The gubgroup $\mathcal{C}\subset U(2)$  has 192 elements and for the case of 3-qubits, $L(\mathcal{C})\subset U(8)$ has 110592 elements. Moreover, the set $Q\mathcal{C}\subset \mathbb{C}^8$ has  8640 states and if we define

$S_0=S_{0,1}=\{U(1,0,0,0,0,0,0,0)^T:U\in L(\mathcal{C})\}$

$S_{2/3,1}=\{U(-1/2,-1/2,-1/2,1/2,0,0,0,0)^T:U\in L(\mathcal{C})\}$

$S_{2/3,2}=\{U(-1/2,-1/2,0,0,-1/2,1/2,0,0)^T:U\in L(\mathcal{C})\}$

$S_{2/3,3}=\{U(-1/2,-1/2,0,0,0,0,-1/2,-1/2)^T:U\in L(\mathcal{C})\}$

$S_1=S_{1,1}=\{U(-1/2,-1/2,0,0,0,0,-1/2,1/2)^T:U\in L(\mathcal{C})\}$

Then all the $S_{i,j}$ are disjoint and their union is the set $Q\mathcal{C}$. The entanglement entropy of all the states in $S_{i,j}$ is $i$ and $S_0$ has 1728 states,  $S_{2/3,1}$ has 1152 states,  $S_{2/3,2}$ has 1152 states,  $S_{2/3.3}$ has 1152 states,  $S_1$ has 3456 states. Additionally, if we define

$$cn_{ij}S=\{CNOT(i,j)\ket{\phi}:\ket{\phi}\in S \}$$ 

then,
\begin{enumerate}
\item
For $(i,j)=(1,2),\, (2,3)$ or $(1,3)$, and $k=1,2,3$ we have  $|cn_{ij}S_0\cap S_0|=960$,  $|cn_{ij}S_{2/3,k}\cap S_{2/3,k}|=384$, and  $|cn_{ij}S_1\cap S_1|=1920$.

\item
$|cn_{23}S_0\cap S_{2/3,1}|=768$, $|cn_{12}S_0\cap S_{2/3,3}|=768$, $|cn_{13}S_0\cap S_{2/3,2}|=768$.

\item

$|cn_{12}S_{2/3,1}\cap S_{1}|=768$, $|cn_{13}S_{2/3,1}\cap S_{1}|=768$, $|cn_{23}S_{2/3,2}\cap S_{1}|=768$, $|cn_{12}S_{2/3,2}\cap S_{1}|=768$, $|cn_{23}S_{2/3,3}\cap S_{1}|=768$, $|cn_{13}S_{2/3,3}\cap S_{1}|=768$ 
\end{enumerate}

\end{thm}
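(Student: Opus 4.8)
The statement is a finite assertion about the action of a finite group on a finite set of vectors, so the plan is to turn every clause into an explicit computer-assisted enumeration, arranged so that a few structural remarks both shorten the search and supply the logic. First I would build $\mathcal{C}$ as the multiplicative closure of $\{H,P\}$ in $U(2)$; the breadth-first closure halts with $192$ matrices, and scanning the list shows the only scalar matrices in $\mathcal{C}$ are the eight matrices $\zeta I$ with $\zeta^{8}=1$. Because $\mathcal{C}$ is a group, $(A_{1}\otimes A_{2}\otimes A_{3})(B_{1}\otimes B_{2}\otimes B_{3})=(A_{1}B_{1})\otimes(A_{2}B_{2})\otimes(A_{3}B_{3})$ is again a tensor triple, so $L(\mathcal{C})$ \emph{equals} $\{A_{1}\otimes A_{2}\otimes A_{3}:A_{i}\in\mathcal{C}\}$. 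Two triples define the same matrix of $U(8)$ exactly when they differ by a phase triple $(\mu_{1},\mu_{2},\mu_{3})\in\mu_{8}^{3}$ with $\mu_{1}\mu_{2}\mu_{3}=1$, and this $64$-element group acts freely on triples, so $|L(\mathcal{C})|=192^{3}/64=110592$ (alternatively one just hashes the $192^{3}$ products). Then $Q\mathcal{C}$ is the orbit of $(1,0,\dots,0)^{T}=\ket{000}$ under the group generated by $L(\mathcal{C})$ together with $CNOT(1,2),CNOT(2,3),CNOT(1,3)$; I would compute it by breadth-first search, applying all generators and storing vectors \emph{exactly} (so $\ket{\phi}$ and $-\ket{\phi}$ are different), and the search closes at $8640$ states.

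Since $L(\mathcal{C})$ sits inside the group whose orbit of $\ket{000}$ is $Q\mathcal{C}$, the set $Q\mathcal{C}$ is $L(\mathcal{C})$-stable, hence a disjoint union of $L(\mathcal{C})$-orbits; and each $S_{i,j}$ is, by its very definition, the $L(\mathcal{C})$-orbit of one of the listed vectors. So any two of the five sets coincide or are disjoint, and it remains to see they are distinct, to identify the entropies, and to count. For this I use two quantities invariant under a gate $A_{1}\otimes A_{2}\otimes A_{3}$: the number $e(\ket{\phi})=\tfrac{1}{3}\sum_{k=1}^{3}S(\rho_{k})$, where $\rho_{k}$ is the one-qubit reduced density matrix of qubit $k$ and $S$ is the von Neumann entropy in bits (invariant because $\rho_{k}\mapsto A_{k}\rho_{k}A_{k}^{\dagger}$ preserves spectra), and, when $e=\tfrac{2}{3}$, the index of the unique qubit with $S(\rho_{k})=0$. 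Evaluating on the five representatives gives $e=0$ for $S_{0}$; $e=\tfrac{2}{3}$ with separable qubit $1,2,3$ for $S_{2/3,1},S_{2/3,2},S_{2/3,3}$ respectively; and $e=1$ for $S_{1}$ — so the five are pairwise distinct orbits, and by invariance $e$ has the asserted value throughout each one. I would then obtain the sizes either by direct enumeration of each orbit or, faster, from $|S_{i,j}|=|L(\mathcal{C})|/|\mathrm{Stab}_{L(\mathcal{C})}(v)|$ after computing the pointwise stabiliser of the representative $v$; this gives $1728,1152,1152,1152,3456$. Since $1728+3\cdot1152+3456=8640=|Q\mathcal{C}|$ and these are five distinct $L(\mathcal{C})$-orbits inside $Q\mathcal{C}$, they are \emph{all} the orbits; hence their union is $Q\mathcal{C}$ and they are pairwise disjoint. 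This settles the partition clause without a separate classification of stabiliser states.

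For each pair $(i,j)\in\{(1,2),(2,3),(1,3)\}$ and each orbit $S$ I would form $cn_{ij}S$ by applying the corresponding $CNOT$ to every (already enumerated) element of $S$, and then for each target orbit $S'$ compute $|cn_{ij}S\cap S'|$ by membership testing against the stored orbits; items (1)--(3) are read off. Two remarks keep this short and self-checking. First, $CNOT(i,j)$ is an involution, so $|cn_{ij}S\cap S'|=|cn_{ij}S'\cap S|$, and for each fixed $(i,j)$ the numbers $|cn_{ij}S\cap S'|$ over all target orbits sum to $|S|$ (e.g. $960+768=1728$, $384+768=1152$, $1920+768+768=3456$); these identities both validate the counts and determine the intersections not listed explicitly. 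Second, $S_{0}$ and $S_{1}$ are the only orbits with $e=0$ and with $e=1$ respectively, hence are invariant under every permutation of the three qubits; combined with $CNOT(j,i)=(H\otimes H)\,CNOT(i,j)\,(H\otimes H)$, with $H\otimes H\in L(\mathcal{C})$, and the $L(\mathcal{C})$-invariance of every orbit, this shows $|cn_{ij}S\cap S'|$ depends only on $\{i,j\}$, and for $S_{0},S_{1}$ not even on that, which explains the uniformity in (1) and the recurring $768$ in (2)--(3) and cuts down the cases one must actually run.

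There is no conceptual obstacle; the proof is a finite, computer-assisted verification. The points that need genuine care are (i) performing all arithmetic \emph{exactly} and counting states up to equality rather than up to global phase — it is the eightfold phase multiplicity that turns the counts $1080,216,144,432$ into $8640,1728,1152,3456$ — and (ii) making the clause $\bigcup S_{i,j}=Q\mathcal{C}$ rigorous, which the size count in the second paragraph accomplishes once the computed value $|Q\mathcal{C}|=8640$ and the five orbit sizes are in hand.
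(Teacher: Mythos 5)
Your proposal is correct and takes essentially the same route as the paper, whose proof is simply the statement that everything is verified by exhaustive search over all states and local gates (citing \cite{T} for $|\mathcal{C}|=192$). You supply considerably more structure than the paper does — the phase-kernel count $192^3/64$, the entropy and separable-qubit invariants to certify disjointness, the orbit-stabiliser and row-sum consistency checks — but these are refinements of, not alternatives to, the same finite enumeration.
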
 

\begin{proof}
The proof is made by searching by exhaustion for all the states and local gates. An explanation of why the group $\mathcal{C}$ has 192 can be found in \cite{T}.
\end{proof}

\subsection{The space $\overline{\mathbb{R}Q\mathcal{C}}$ } This section gives some properties of the states in $\mathbb{R}Q\mathcal{C}$.

\begin{thm} The gubgroup $\mathbb{R}\mathcal{C}\subset U(2)$  has 16 elements and for the case of 3-qubits, $L(\mathbb{R}\mathcal{C})\subset U(8)$ has 1024 elements. In general for the case of $n$-qubits, $|L(\mathbb{R}\mathcal{C})|=2^{3n+1}$. Moreover, the set $\mathbb{R}Q\mathcal{C}\subset \mathbb{C}^8$ has  480 states and they are exactly those states $Q\mathcal{C}\subset \mathbb{C}^8$ with all its amplitudes real. If we define,

$R_0=R_{0,1}=\{U(1,0,0,0,0,0,0,0)^T:U\in L(\mathcal{C})\}$

$R_{2/3,1}=\{U(-1/2,-1/2,-1/2,1/2,0,0,0,0)^T:U\in L(\mathcal{C})\}$

$R_{2/3,2}=\{U(-1/2,-1/2,0,0,-1/2,1/2,0,0)^T:U\in L(\mathcal{C})\}$

$R_{2/3,3}=\{U(-1/2,-1/2,0,0,0,0,-1/2,-1/2)^T:U\in L(\mathcal{C})\}$

$R_{1,1}=\{U(-1/2,-1/2,0,0,0,0,-1/2,1/2)^T:U\in L(\mathcal{C})\}$

$R_{1,2}=\{U(-1/2,0,0,-1/2,0,-1/2,1/2,0)^T:U\in L(\mathcal{C})\}$

Then, all the $R_{i,j}$ are disjoint and their union is the set $\mathbb{R}Q\mathcal{C}$. The entanglement entropy of all the elements in $R_{i,j}$ is $i$ and $R_0$ has 128 states,  $R_{2/3,1}$ has 64 states, as well as  $R_{2/3,2}$ and  $R_{2/3.3}$,   $R_{1,1}$ has 128 states and $R_{1,2}$ has 32 states.

\begin{enumerate}
\item
For $(i,j)=(1,2),\, (2,3)$ or $(1,3)$, and $k=1,2,3$ we have  $|cn_{ij}R_0\cap R_0|=96$,  $|cn_{ij}R_{2/3,k}\cap R_{2/3,k}|=32$, $|cn_{ij}R_{1,1}\cap R_{1,1}|=32$ and  $|cn_{ij}R_{1,1}\cap R_{1,2}|=32$
\item

$|cn_{12}R_0\cap R_{2/3,3}|=32$, $|cn_{23}R_0\cap R_{2/3,1}|=32$, $|cn_{13}R_0\cap R_{2/3,2}|=32$.

\item

$|cn_{12}R_{2/3,1}\cap R_{1,1}|=32$, $|cn_{13}R_{2/3,1}\cap R_{1,1}|=32$, $|cn_{23}R_{2/3,2}\cap R_{1,1}|=32$, $|cn_{12}R_{2/3,2}\cap R_{1,1}|=32$, $|cn_{23}R_{2/3,3}\cap R_{1,1}|=32$, $|cn_{13}R_{2/3,3}\cap R_{1,1}|=32$ 
\end{enumerate}

\end{thm}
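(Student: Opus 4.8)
The plan is to follow the same computational strategy used for the Clifford‑group theorem, but now with the much smaller group $\mathbb{R}\mathcal{C}$, so that most of the bookkeeping can in principle be done by hand or checked by a short exhaustive search. First I would verify the group‑theoretic data: the matrices $H$ and $Z$ generate a group of order $16$ inside $U(2)$ — concretely the eight signed permutation matrices $\pm I,\pm Z$ together with the eight matrices $\pm H,\pm HZ,\pm ZH,\pm ZHZ$ — and since $L(\mathbb{R}\mathcal{C})$ is generated by tensor products $A_1\otimes A_2\otimes A_3$, one must quotient by the relation $(cA_1)\otimes(cA_2)\otimes(cA_3)=A_1\otimes A_2\otimes A_3$ when $c=\pm1$; this gives $|L(\mathbb{R}\mathcal{C})|=16^3/ (\text{identifications})$. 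Tracking the global‑sign identifications carefully yields $2^{3n+1}$ in general (for $n=3$, $2^{10}=1024$), and I would spell this out as a short lemma so the count is not merely asserted.

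Next I would establish that $\mathbb{R}Q\mathcal{C}$ is exactly the set of real‑amplitude states in $Q\mathcal{C}$. One inclusion is immediate ($\mathbb{R}\mathcal{C}\subset\mathcal{C}$ and $CNOT$ is real, so every state in $\mathbb{R}Q\mathcal{C}$ is a real state in $Q\mathcal{C}$). For the reverse I would argue that any state in $Q\mathcal{C}$ with all real amplitudes already lies on one of the six listed $R$‑orbits — this can be checked by intersecting the $8640$ states of $Q\mathcal{C}$ (from the previous theorem) with $\mathbb{R}^8$ and confirming the result has $480$ elements partitioned as claimed; equivalently, one picks the six representatives, shows each has real amplitudes, generates its $L(\mathbb{R}\mathcal{C})$‑orbit, and checks the orbits are pairwise disjoint with sizes $128,64,64,64,128,32$ summing to $480$. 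The entanglement entropy of each representative is a direct partial‑trace computation: reduced density matrix eigenvalues $1,0$ give entropy $0$ for $R_0$; eigenvalues $3/4,1/4$ give $-\tfrac34\log_2\tfrac34-\tfrac14\log_2\tfrac14=2-\tfrac34\log_2 3\approx\tfrac23$ (the paper's "$2/3$"); eigenvalues $1/2,1/2$ give entropy $1$. Since local gates preserve the Schmidt spectrum across every bipartition, these values are constant on each orbit.

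For the $CNOT$‑action counts in items (1)–(3), I would apply $CNOT(i,j)$ to each of the $480$ states, classify the image by which $R$‑orbit it lands in (using the entanglement‑entropy invariant to cut down cases, then checking membership directly), and tally the intersections. The symmetries among the three qubit labels reduce the work: conjugating by qubit permutations sends $CNOT(1,2)$ to $CNOT(2,3)$ to $CNOT(1,3)$ and permutes $R_{2/3,1},R_{2/3,2},R_{2/3,3}$ accordingly, so it suffices to compute one representative of each symmetry class and transport the rest. One subtlety to flag: $R_{1,1}$ and $R_{1,2}$ both have entropy $1$ but are genuinely distinct orbits, so the entropy invariant does not separate them — distinguishing them requires a finer invariant (for instance the multiset of Schmidt spectra over all three bipartitions, or the $L(\mathbb{R}\mathcal{C})$‑orbit computation itself), and this is where I expect the main obstacle to lie: one must verify that $R_{1,1}$ and $R_{1,2}$ really are not $L(\mathbb{R}\mathcal{C})$‑equivalent even though they are $L(U(2))$‑equivalent, paralleling Theorem \ref{one}. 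As in that theorem, the clean way to do this is to set up the linear system $\ket{v_2}=(A_1\otimes A_2\otimes A_3)\ket{v_1}$ with the $A_i$ general orthogonal $2\times2$ matrices and show the polynomial system (together with $a_i^2+b_i^2=1$) has trivial Gröbner basis $\{1\}$, or simply observe it directly in the finite setting by exhausting the $1024$ gates. Modulo that separation, the theorem reduces to a finite, entirely mechanical verification, so the honest proof is: "carried out by exhaustive computer search over the $1024$ elements of $L(\mathbb{R}\mathcal{C})$ and the $480$ states," with the group‑order formula $2^{3n+1}$ proved by hand as above.
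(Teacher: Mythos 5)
Your proposal matches the paper's own proof, which is exactly this: the order formula $|L(\mathbb{R}\mathcal{C})|=2^{3n+1}$ is obtained by hand from the identification $V\otimes U=(-V)\otimes(-U)$ (the paper phrases it as an induction on $n$, you as a direct quotient count of $16^n$ by the $2^{n-1}$ sign patterns with product $1$ --- the same idea), and every other claim, including the separation of $R_{1,1}$ from $R_{1,2}$ and the $CNOT$ tallies, is established by exhaustion over the finitely many gates and states. One small slip in your explicit description of $\mathbb{R}\mathcal{C}$: the signed permutation matrices it contains are the eight matrices $\pm I,\pm Z,\pm HZH,\pm HZHZ$, not just $\pm I,\pm Z$, so your list as written has only $12$ elements; this does not affect the argument, since the group is the dihedral group of order $16$ either way.
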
 

\begin{proof}

To prove that in  the case of $n$-qubits, $|L(\mathbb{R}\mathcal{C})|=2^{3n+1}$ we argue by induction and we use the fact that $V\otimes U$ and  $(-V)\otimes (-U)$. The proof of the other statements are made 
 by exhaustion on all the states and local gates being considered.
 
\end{proof}

\section{Conclusions}

In the paper \cite{Z} the authors define the distance between two states as the number of $CNOT$ gates needed to transform one state to the other. They show that for 3-qubits, any state is within a distance of $3$ to the state $\ket{000}$. Also they prove that the distance between any pair of states is less than or equal to 4.  They leave as an open question to find out if four is the maximum distance between two 3-qubit states. 

Figures \ref{cg} and \ref{rg} allow us to conclude that using the line topology, this is, using only $CNOT(1,2)$ and $CNOT(2,3)$ and local gates form $L(\mathcal{C})$, we have that the maximum distance between any pair of states in $Q\mathcal{C}$ is 3 and, if we use the all-to-all topology, this is, if we consider all the $CNOT$ gates, then the maximum distance is 2. Likewise, using the line topology with local gates form $L(\mathbb{R}\mathcal{C})$, the maximum distance between any pair of states in $Q\mathcal{C}$ is 3 and  if we use the 
all-to-all topology with local gates in $L(\mathbb{R}\mathcal{C})$, the maximum distance remain being 3.

\section{Acknowledgments}

The author would like to thank Eric Peterson, Marcus P. da Silva, Alejandro Perdomo-Ortiz and Vicente Leyton for suggestions and feedback on an early draft of this paper.

\end{document}